\documentclass[
superscriptaddress,
amsmath,amssymb,
aps,
prx,
]{revtex4-2}

\usepackage{amsmath,amssymb,amsfonts,dsfont,xspace,graphicx,relsize,bm,mathtools,xcolor,amsthm,soul}
\usepackage{graphicx}
\usepackage{dcolumn}
\usepackage{xcolor}
\usepackage{bbm}
\usepackage{physics}
\usepackage{ulem}
\usepackage{hyperref}
\hypersetup{
	colorlinks = true,
	urlcolor   = blue,
	linkcolor  = blue,
	citecolor  = red
}
\usepackage{algorithm}
\usepackage{algpseudocode}
\usepackage{float}
\usepackage{siunitx}
\usepackage[shortlabels]{enumitem}

\DeclareUnicodeCharacter{221A}{$\sqrt{}$}

\newcommand{\1}{\mathbbm{1}}

\newcommand{\bx}{{ x}}

\newcommand{\cH}{{\mathcal{H}}}
\newcommand{\cD}{{\mathcal{D}}}
\newcommand{\cF}{{\mathcal{F}}}

\newcommand{\cX}{{\mathcal{X}}}
\newcommand{\cT}{{\mathcal{T}}}
\newcommand{\cM}{{\mathcal{M}}}

\newcommand{\X}{{\mathsf{X}}}

\newcommand{\ii}{{\rm i}}
\makeatletter
\let\X\@undefined
\makeatother
\newcommand{\X}{{\mathcal{X}}}

\newcommand{\dketbra}[1]{|{#1}\rangle\langle{#1}|}
\renewcommand{\tr}[1]{{\rm Tr}\left[#1\right]}

\newcommand{\E}[2]{\mathbb E_{#1}\left[#2\right]}

\newcommand{\beq}{\begin{equation}}
\newcommand{\beql}[1]{\begin{equation}\label{#1}}
\newcommand{\eeq}{\end{equation}}
\newcommand{\eeqp}{\,\,\,.\end{equation}}
\newcommand{\eeqc}{\,\,\,,\end{equation}}

\newcommand{\argmin}[2]{\underset{#1}{\rm argmin}\, #2}

\theoremstyle{plain}

\newtheorem{theorem}{Theorem}

\begin{document}

\title{Photon-starved polarimetry via functional classical shadows}
\author{M. Rosati}
\affiliation{Dipartimento di Ingegneria Civile, Informatica e delle Tecnologie Aeronautiche, Universit\`a degli Studi Roma Tre, Via Vito Volterra 62, 00146 Rome, Italy}
\email{matteo.rosati@uniroma3.it} 
\author{M. Parisi}
\affiliation{Dipartimento di Scienze, Universit\`a degli Studi Roma Tre, Via della Vasca Navale, 84, 00146 Rome, Italy}
\author{L. Sansoni}
\affiliation{Nuclear Department, ENEA, Via E. Fermi 45, 00100 Frascati, Italy}
\author{E. Stefanutti}
\affiliation{Nuclear Department, ENEA, Via E. Fermi 45, 00100 Frascati, Italy}
\author{A. Chiuri}
\affiliation{Nuclear Department, ENEA, Via E. Fermi 45, 00100 Frascati, Italy}
\author{M. Barbieri}
\affiliation{Dipartimento di Scienze, Universit\`a degli Studi Roma Tre, Via della Vasca Navale, 84, 00146 Rome, Italy}
\affiliation{Istituto Nazionale di Ottica - CNR, Largo E. Fermi 6, 50125 Florence, Italy}
\affiliation{INFN Sezione Roma Tre, Via della Vasca Navale, 84, 00146 Rome, Italy}


\begin{abstract} 
Polarimetry and optical imaging techniques face challenges in photon-starved scenarios, where the low number of detected photons imposes a trade-off between image resolution, integration time, and sample sensitivity. Here we introduce a quantum-inspired method, functional classical shadows, for reconstructing a polarization profile in the low photon-flux regime. Our method harnesses correlations between neighbouring datapoints, based on the recent realisation that machine learning can estimate multiple physical quantities from a small number of non-identical samples. This is applied to the experimental reconstruction of polarization as a function of the wavelength. Although the quantum formalism helps structuring the problem, our approach suits arbitrary intensity regimes.
\end{abstract}
\maketitle


\section{Introduction}

Polarimetry is a valuable and versatile tool for the investigation of matter, able to capture subtle features with many advantageous properties in terms of relative non-invasiveness, ease of control and accuracy~\cite{DUBOVIK2019474,He2021a}. In this setting, a low number of collected events may represent a significant roadblock to obtaining accurate estimates, due to the limited signal-to-noise ratio (SNR) dominated by fluctuations.  This leaves the optimal use of the available resources as an outstanding issue when it comes to regularising the data. For this purpose, 
both computational methods from machine learning ~\cite{rs15061540},  and physics-based enhancements using correlations have been applied\cite{Brida2010,Morris2015, Magnitskiy20,Magnitskiy22}.
The question of optimal information extraction is equally relevant for quantum technologies. In that context, the necessity of efficiently characterizing large quantum states emerges based on small samples~\cite{Aaronson2018,Huang2020,Bilkis2021,Fanizza2022,Rosati2022,Rosati2023,Huang2023a,Caro24}. The formal equivalence between a quantum bit and the polarization vector, even of classical light, suggests that solutions pertaining to quantum technologies can be applied as successfully to polarimetry.

In recent years, the use of a machine-learning mindset in quantum technologies has given birth to the field of quantum machine learning (QML). One of the major realization of QML so far is that, in most practical cases, one does not need a complete reconstruction of the state of a system, but rather, only to estimate, or \emph{learn}, the value of  relevant physical quantities. This is closely related to many problems in polarimetry, in which the target modification is known, hence one does not need to rely on a black-box reconstruction. In the specific,  Aaronson~\cite{Aaronson2018} proved that one can estimate the value of $M$ physical quantities on a $d$-dimensional quantum system using roughly $O(\log^4(M) \cdot \log d)$ copies of the system's state, via a theoretical technique called \emph{shadow tomography}; this is in stark contrast with a full state reconstruction, i.e., tomography, which requires exponentially more copies $O(d^2)$. A practical strategy to do so, named \emph{classical shadows} (CS), was introduced by Huang et al.\cite{Huang2020} and has since then been widely applied to quantum computing technologies where the experimenter has access to identical copies of the quantum state under investigation~\cite{Struchalin2021}. However, little work has been done in the direction of metrology, where one is usually interested in studying systems that have a natural variability with respect to tunable parameters and experimental imperfections. In particular, Ref.~\cite{Fanizza2022} investigated this setting, providing theoretical learning guarantees for a variety of metrologically motivated problems, e.g., including phase- and Hamiltonian-estimation. For this class of problems, the computational advantage identified in~\cite{Aaronson2018,Huang2020,Fanizza2022} translates to harnessing a low photon flux regime. To the best of our knowledge, no practical strategy for this task has been put forward so far. \\

In this article, we introduce a practical technique for learning in a quantum-metrology setting, named \emph{functional classical shadows} (FCS), and demonstrate its application to polarimetry.  We frame the problem as supervised learning of a function that maps classical inputs $\bx$ to quantum states $\rho(\bx)$, using a training set of examples $\{(\bx_i,\rho(\bx_i))\}_{i=1}^T$. We show that the CS technique can be extended and adapted to this case, where each quantum state $\rho(\bx_i)$ observed by the experimenter is potentially different from the others in the training set, and provide similar performance guarantees in terms of \emph{sample complexity}, i.e., the required number of samples to reach a desired precision. We then demonstrate our technique in a practical scenario of retrieving a phase-profile, 
where a birefringent crystal  imparts a distinct optical phase-shift to the impinging light, which is then analysed in optical wavelength. Employing standard polarization measurements, we compare the phase-profile reconstructed via FCS and that obtainable via standard CS. We show that, while CS is constrained to reconstruct the profile pixel-by-pixel, our FCS method enables reconstructing the profile as a function of the wavelength, thus yielding a reconstruction that takes into account dependencies between neighboring pixels.

We anticipate that FCS will be widely applicable to classical, as well as quantum scenarios, and often with minimal changes to measurement apparatuses, offering an alternative to standard data-processing techniques informed by quantum information processing. We expect that FCS will be particularly relevant in boosting polarimetry in those photon-starved settings where few photon-detection events can be afforded by the experimenter in a finite time window, with an outlook to its application to phase imaging~\cite{Zhang24}.



\section{Functional Classical Shadows}

In a typical experiment, one wants to study the dependence of some observable features on independent spatial, time, or frequency degrees of freedom (DOFs). In our formalism, we describe the measured observable as a quantum state $\rho$, while the other DOFs are represented by a classical variable $\bx$. In our example, $\rho$ is associated to a polarisation state, while $\bx$ represents a wavelength.
The characterisation problem is thus structured as the learning problem of the function $\rho(\bx)$ that maps the classical input to the polarization state. The learning procedure is based on a collection of training data $\cT = \{(\bx_i,\rho(\bx_i))\}_{i=1}^T$ resulting from different probings of the system at different values of the DOFs.

Standard tomographic techniques require the experimenter to estimate the expectation of $O(d^2)$ observables, where $d$ is the dimension of the Hilbert space, in order to fully reconstruct a single quantum state, i.e., the density matrix $\rho(\bx)$ corresponding to a fixed value of $\bx$. Such unfavourable scaling with $d$,  growing exponentially with the number of physical systems used as probe, is further aggravated by the potentially unlimited number of values taken by $\bx$, for each of which the estimation has to be repeated over again. On the contrary,  Ref.~\cite{Fanizza2022} proved that the function $\rho(\bx)$ can be learned efficiently, finding its best approximation inside a given hypothesis class $\cF=\{\eta(\bx)\}$, via a number of experiments that can scale logarithmically with $d$ and with a suitable measure on the classical variable space, even when the latter has infinite cardinality. However, Ref.~\cite{Fanizza2022} lacked a feasible strategy to learn $\rho(\bx)$ in practice, as it requires to perform complex multi-probe entangling operations that are out of reach in most scenarios. 

Here we tackle this issue by introducing a practical technique for learning $\rho(\bx)$, named \emph{functional classical shadows} (FCS), depicted in Figure.~\ref{fig:method}, which leverages the CS method, initially introduced to learn \emph{constant} features from identical copies $\rho$ of quantum data~\cite{Huang2020}. Interestingly, we show that CS can also be applied to calculate  efficiently a suitable average loss for each hypothesis $\eta\in\cF$, quantifying how well $\eta(\bx)$ approximates $\rho(\bx)$ on average over $\bx$. The optimal hypothesis  that best approximates $\rho(\bx)$ can then be obtained via a classical optimization algorithm.  

\begin{figure}
    \centering
    \includegraphics[width=0.7\linewidth]{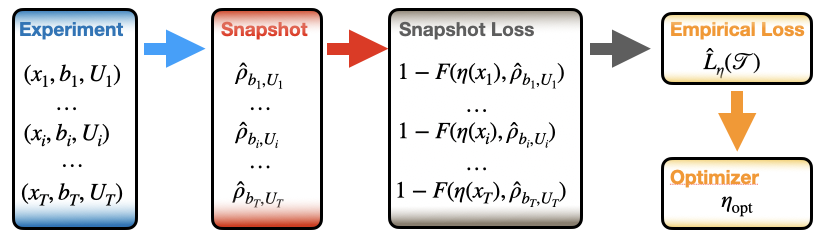}
    \caption{Schematic depiction of the FCS method: the $i$-th run of the experiment outputs the value of the independent variable $x_i$, the outcome $b_i$ and the random unitary $U_i$  applied to the unknown quantum state $\rho(x_i)$, following the CS method. These are employed to compute the classical snapshot $\hat\rho_{b_i,U_i}$ as in Eq.~\eqref{eq:snapshot_def} and the corresponding loss. The latter is obtained, for a pure-state hypothesis $\eta(x_i)$,  in terms of its fidelity to the snapshot. By averaging over several run, one obtains the empirical loss as a function of the hypothesis state, which can be optimized with a classical method.}
    \label{fig:method}
\end{figure}

The first step consists in collecting training events, included in $\cT$: for each event $i$, the quantum state $\rho(\bx_i)$ is transformed by a random unitary $U_i$ , and then measured in a fixed basis. This gives the outcome $b_i$ with probability $P(b_i|U_i,\bx_i) = \bra{b_i}U_i\rho(\bx_i) U_i^\dag\ket{b_i}$. The measurement procedure is described by the quantum channel $\cM$: 
    \begin{equation}
        \cM(\rho) = \E{U}{\sum_b\bra{b}U\rho U^\dag\ket{b} \cdot U^\dag\dketbra{b}U},
    \end{equation}
where the expectation value is taken over all possible unitary transformations $U$; this naturally accounts for the average over the events.  In our case, each $U$ is a polarisation rotation and the measurement separates the horizontal and vertical components, hence the outcomes correspond to either $\ket{b}=\ket{H}$ or $\ket{b}=\ket{V}$; alternatively, one can look at this procedure as a projection along the rotated directions $U^\dagger \ket{b}$. 

The model is used for the data analysis that requires computing the \emph{classical snapshot}
    \begin{equation}\label{eq:snapshot_def}
        \hat\rho_{b_i,U_i} = \cM^{-1}(U_i^\dag\dketbra{b_i}U_i),
    \end{equation}
where $\cM^{-1}$ is the inverse of the quantum channel $\cM$.  The classical snapshot $\hat\rho_{b_i,U_i}$ is a purely mathematical object:  it is not the result of a physically realizable map, therefore it does not enjoy all properties of a physical density matrix. In fact, it has unit trace, but it is not necessarily positive-semidefinite.  Furthermore we note that, while the snapshot does not explicitly depend on $\bx_i$, the value of $b_i$ does via $P(b_i|U_i,\bx_i)$; for ease of notation we do not write this dependence explicitly. 

In the following, for simplicity, we describe our FCS method under the assumption that the hypotheses are represented by pure quantum states, where the loss function can be taken as the fidelity. The general case of mixed-state hypotheses employs instead the trace-distance loss, and it is discussed in the Methods. 
The suitability of a hypothesis $\eta(\bx)=\dketbra{\eta(\bx)}\in\cF$ is quantified by means of the empirical loss, averaged on the training set:
\begin{equation}
        \hat L_\eta(\cT) = \frac1T\sum_{i=1}^T 1-F(\eta(\bx_i),\hat\rho_{b_i,U_i}),
    \end{equation}
where $F(\eta(\bx_i),\hat\rho_{b_i,U_i})=\bra{\eta(\bx_i)}\hat\rho_{b_i,U_i}\ket{\eta(\bx_i)}$ is the fidelity. The best hypothesis is then selected by minimizing the average empirical loss, i.e., 
    \begin{equation}\label{eq:optimal_hypothesis}
        \eta_{\rm opt} = \argmin{\eta\in\cF}{\hat L_\eta(\cT)}.
    \end{equation}
It can be demonstrated that for all $\eta\in\cF$ the average empirical loss $\hat L_\eta$ is a good estimate of the true loss
\begin{equation}
    L_\eta = 1-{\mathbb E}_{\bx}F(\eta(\bx),\rho(\bx)),
\end{equation}
where the expectation is with respect to the (potentially unknown) probability distribution of the classical DOFs $\bx$, as measured by its bias and variance (see Methods). 
Note that, since $\hat\rho_{b_i,U_i}$ are not physical states, the fidelity and loss are not bounded to remain between 0 and 1. 

\section{Experimental FCS polarimetry}

We showcase our general FCS method in a specific scenario, that is at the basis of a variety of practical situations. Our aim is to reconstruct the birefringence of a material at different values of the optical wavelength $\lambda$. In our notation, the problem is that of learning the form of a qubit state $\rho(\bx)$ - spanned by the basis of the horizontal $\ket{H}$ and vertical $\ket{V}$ polarisations - as a function of the classical variable $\bx = \lambda$.

Our light source adopts heralded detection of single photons generated by spontaneous parametric down-conversion (SPDC), see Fig.~\ref{fig:setup}. A 2-mm-long lithium niobate (LN) crystal (Castech) was illuminated by a pulsed laser  ($\lambda_P = 532\,\rm{nm}$, $8\,\rm{ps}$  duration, $40\,\rm{MHz}$, repetition rate). Type-I phase matching in the LN crystal generates pairs of collinear photons in the same $\ket{H}$ polarization state. The phase matching (PM) conditions were optimized to obtain idler photons centered at $\lambda_i \approx 1550\,\rm{nm}$, which correspond to a signal wavelength $\lambda_s \approx 810\,\rm{nm}$. 

While the idler arm is used for triggering, the signal photons are actually used for probing. These are initially prepared in the antidiagonal state $\ket{A} = (\ket{H}-\ket{V})/\sqrt{2}$, independently of the wavelength, by means of an achromatic half-wave plate (HWP1). Then, a 3-mm-thick $\beta$-barium borate (BBO) nonlinear crystal was placed on the optical path of the signal photon, thus inducing a frequency-dependent phase shift $\theta(\bx)$ between the horizontal and vertical component. Polarisation analysis was carried out by a system consisting of a quarter-wave plate (QWP), a half-wave plate (HWP2), and a polarizing beam splitter (PBS2), allowing us to project along an arbitrary state. We adopt projection in the canonical Stokes bases i.e., $U^\dagger\ket{b}\in\{\ket{H},\ket{V}, \ket{D},\ket{A},\ket{R},\ket{L}\}$, where $\ket{D/A} = \frac{\ket{H}\pm\ket{V}}{\sqrt2}$ and $\ket{R/L} = \frac{\ket{H}\pm \mathbf{i}\ket{V}}{\sqrt2}$ are the diagonal, anti-diagonal, right- and left-circular polarizations. This choice, in quantum informational terms, consists of applying random Pauli operations to our qubit that are known to yield simple expressions for the classical snapshots~\cite{Huang2020}: 
\begin{equation}
\label{eq:polarization_shadows}
\cM^{-1}(O) = 3O-\1.
\end{equation}

Both photons were coupled into multi-mode optical fibers. The idler photon was detected using a single-photon avalanche diode (SPAD, MPD PDM-IR), triggered by the sync-out TTL  signal from the pump laser. The signal photon was directed to a spectrograph (Andor Kymera 328I-A-SIL) equipped with a 600 lines/mm diffraction grating and coupled to an intensified CCD camera (Andor iSTAR iCCD DH334T-18U-73), enabling the spectral characterization of the photon pair. The camera was run in gated mode, activated from the heralded SPAD; the correct time delay for capturing correlated photons was set by a proper length of the fibres and by an electronic delay finely adjusted by means of  a field programmable gate array (FPGA) board. 

The single photon counts recorded by the iCCD for each pixel in photon-counting mode represent the raw data acquired in our measurements. By repeating the experiment over several timeframes, a training set $\cT$ can be constructed. The acquisition time and the number of frames were set to ensure a statistically significant number of photon counts for each measurement, while maintaining consistent statistics. We stress that the use of a heralded photon source is convenient only for practical reasons: first, the gated mode allows improving noise rejection; second, running the intensified CCD in the photon counting modality, using genuine single photons avoids artifacts coming from the presence of higher-number contributions, which are proportionally more likely to be observed. However, the FCS does not take this as a requirement, and could be applied to classical light, as well.

\begin{figure}[h!]
   \centering
   \includegraphics[width=\columnwidth]{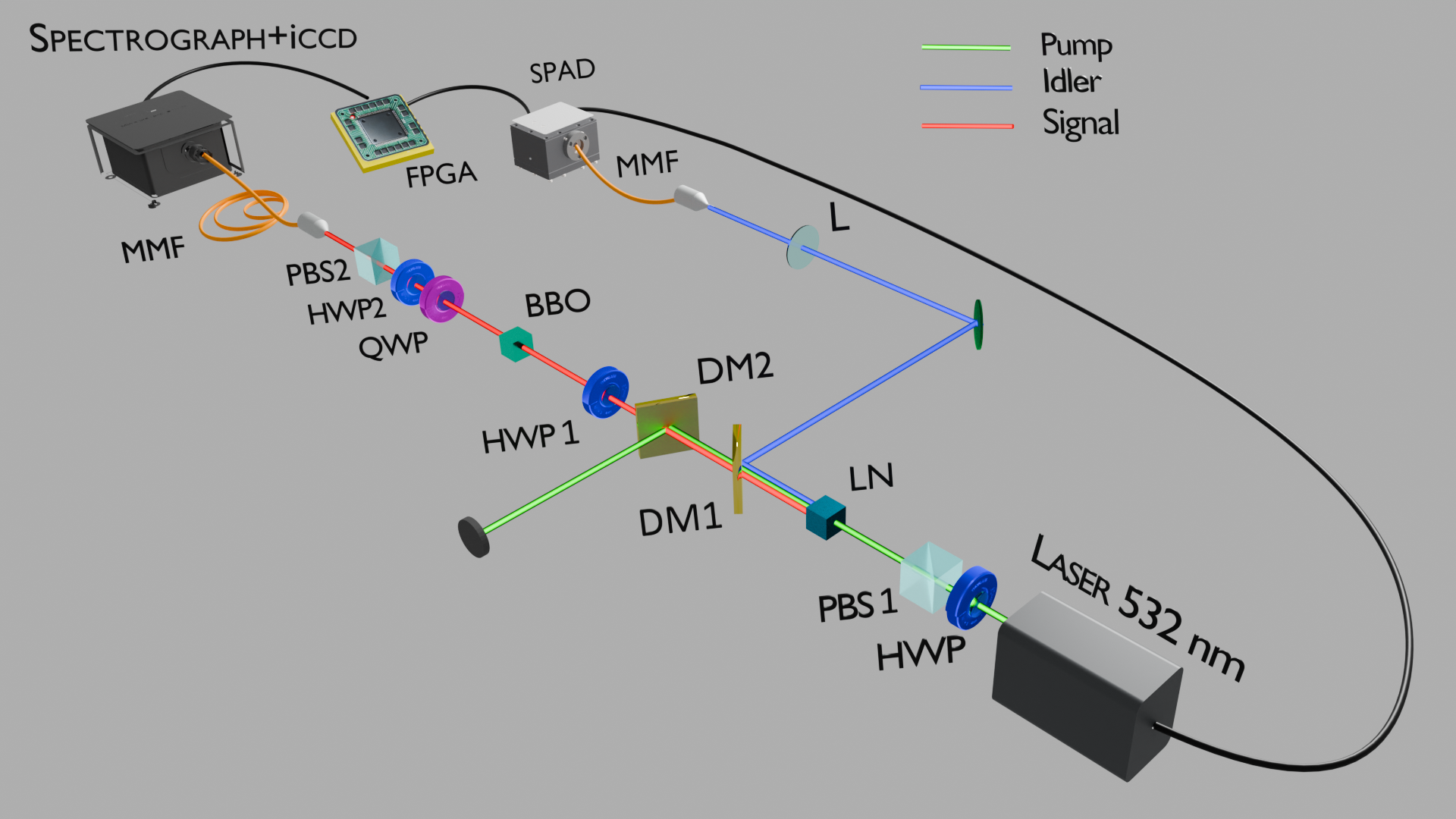}
   \caption{Experimental setup: a $532\,\rm{nm}$ pulsed laser beam is focused onto a $2$-mm thick type I lithium niobate (LN) nonlinear crystal to generate entangled photon pairs (signal and idler), separated by a dichroic mirror (DM1). The idler photon, reflected by DM1, is revealed by the SPAD, which serves as a trigger for the detection of the signal photon, directed along the transmitted path of the DM1. The signal photon passes through a half-wave plate (HWP1) setting a diagonal polarization, followed by a $3$-mm thick BBO which introduces a frequency-dependent phase shift. A polarization analyzer (QWP, HWP2, PBS2) projects the polarization state of the signal photon onto $\ket{H}$, $\ket{V}$, $\ket{D}$, $\ket{A}$, $\ket{R}$, or $\ket{L}$ state, before reaching the detection system, which consists of a spectrograph and an intensified CCD. The signal photon is measured only if a trigger signal is received by the heralding detector (SPAD).
   Other optical elements in figure are HWP: half-wave plate preparing the repolarization state of the pump beam, PBS1: polarizing beam splitter for the pump beam, DM2: dichroic mirror used to remove residuals of pump beam, L: collimating lens ($f = 300\,\rm{mm}$), M: mirror, MMF: multi-mode fiber.}
   \label{fig:setup}
\end{figure}

We aim at reconstructing the phase profile of the birefringent phase $\phi(\bx)$, while also allowing for an unwanted rotation $\theta(\bx)$, to be considered a nuisance parameter, {i.e.} a quantity that need to be estimated in order to get a meaningful reconstruction of $\phi(x)$, although it does not represent {\it per se} our target.
Therefore, the hypothesis class is parametrized as follows:
\begin{align}
    \cF := &\Big\{
   \eta(\bx)=\dketbra{\eta(\bx)}:\nonumber\\
    &\ket{\eta(\bx)}=\cos\frac{\theta(\bx)}{2} \ket{H} + e^{\ii\phi(\bx)}\sin\frac{\theta(\bx)}{2}\ket{V},\label{eq:hyp_class}\\
    & \theta\in[0,\pi]^{\cX}, \phi\in[0,2\pi]^{\cX}\Big\}\nonumber,
\end{align}
and the fidelity of a hypothesis with a classical snapshot 
is calculated via the inversion~\eqref{eq:polarization_shadows} (see Methods).  We then consider two different optimization methods based on the calculated shadows.

In the standard CS approach, a \emph{local} loss function is obtained at each point $\bx$, by averaging over the six polarisation settings:  
\begin{equation}
\label{eq:loss_loc}
    \hat L_\psi(\bx) = 1-\sum_{p}\frac{n_p(\bx)}{N(\bx)} F(\eta(\bx),\hat\rho_p),
\end{equation}
where $p$ labels different measured polarizations, $n_p(x)$ is the number of counts for a given polarization and point, while $N(\bx)=\sum_p n_p(x)$ is the total number of counts. Note that the CS optimization finds a different candidate hypothesis for each distinct $\bx$ in the dataset, without imposing any correlation between states at distinct points. 

In our \emph{functional} approach, instead, we first set a functional dependence for $\theta(\bx)$ and $\phi(\bx)$ 
, and then obtain a \emph{global} loss function by averaging over the whole training set

 \begin{equation}\label{eq:loss_fcs}
     \hat L_\psi(\cT) = 1-\sum_{i}\sum_{p}\frac{n_p(\bx_i)}{N(\bx_i)} F(\eta(\bx_i),\hat\rho_p),
 \end{equation}

The optimal $\eta(\bx)\in\cF$ is then found by minimizing the empirical loss \eqref{eq:loss_fcs} with respect to the functional parameters. We can thus make use of the entire $\bx$-dependent dataset to reconstruct a single function mapping to quantum states.

The results are shown in Fig.~\ref{fig:plots}, where we plot the $\phi(\bx)$ and $\theta(\bx)$ profiles reconstructed with the two methods, employing a linear function for FCS parameters. The plots show that FCS is able to make use of correlations between different quantum data-points, effectively realizing a reconstruction of the density-matrix function $\eta(\bx)$ corresponding to the observed data. 
\begin{figure}
    \centering
    \includegraphics[width=0.45\linewidth]{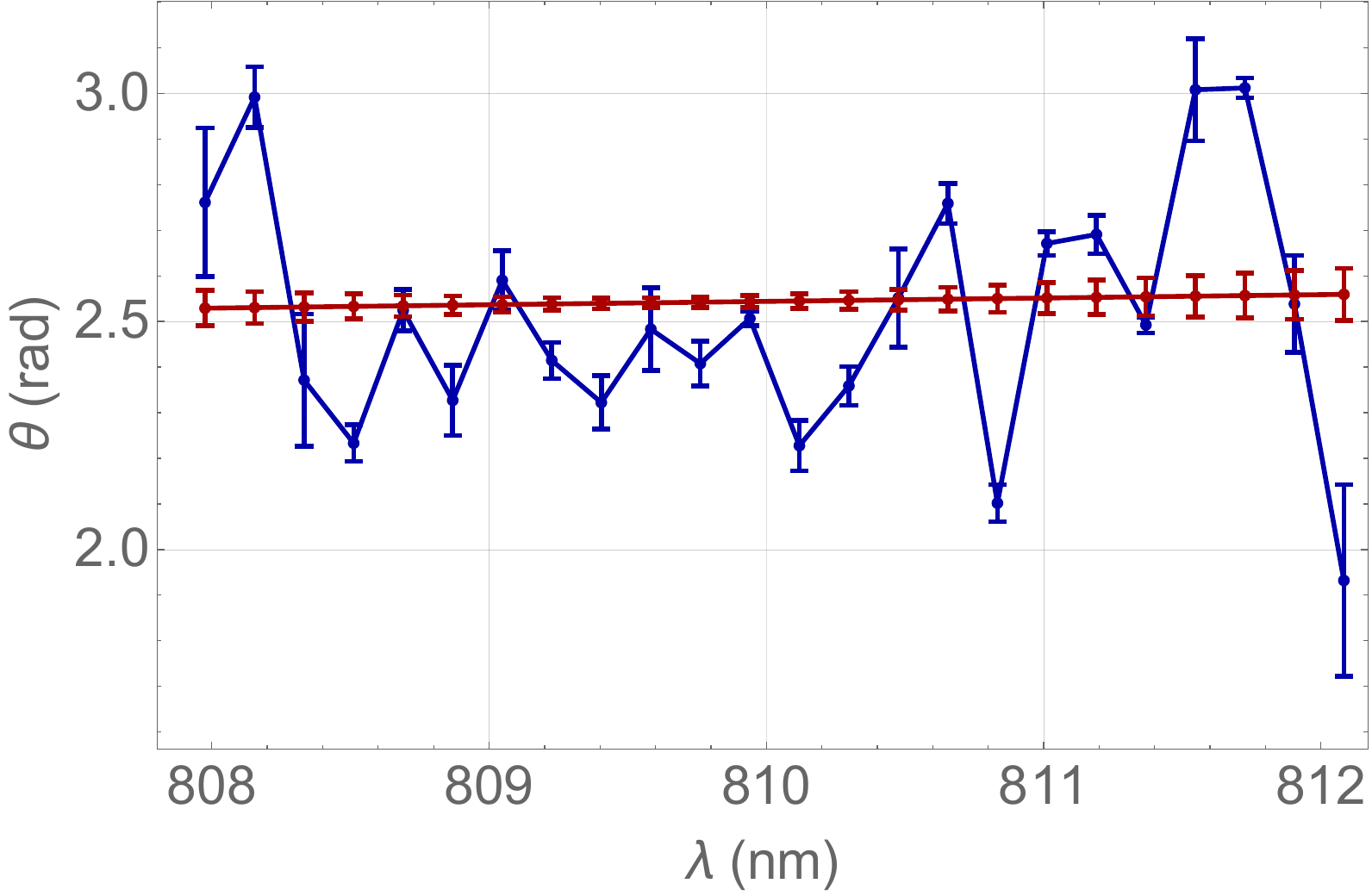}
    \includegraphics[width=0.44\linewidth]{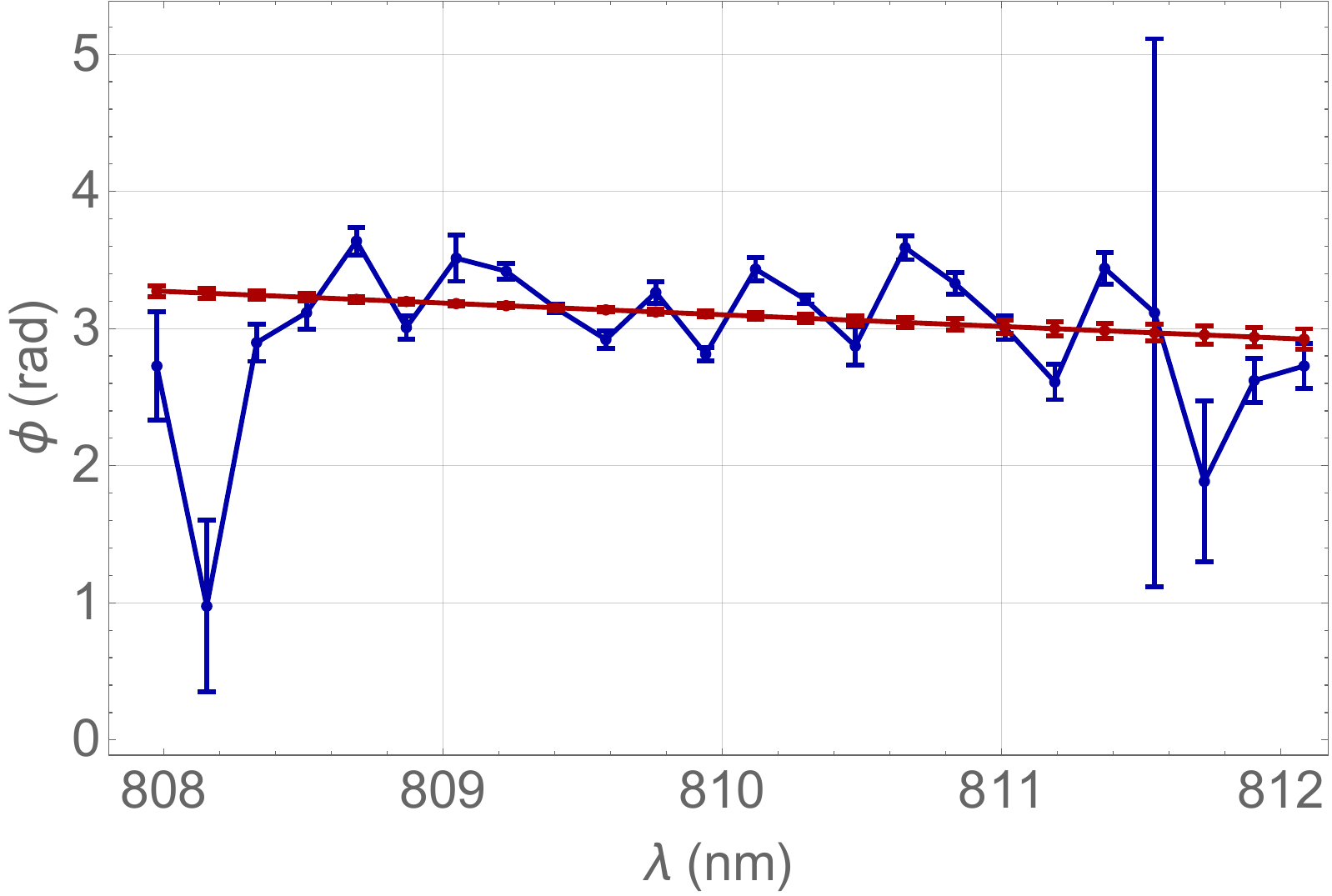}
    \caption{
    Plot of the reconstructed phase-profiles with CS (blue) and FCS (red) methods as a function of the wavelength.
    While standard CS performs a local optimization with no relation between neighboring datapoints, FCS can take advantage of correlations in the quantum data-set over the whole analyzed wavelength range, reconstructing a functional profile (in this case linear). 
    }
    \label{fig:plots}
\end{figure}

\section{Discussion and Conclusions}
In this article, we introduced the functional classical shadows method for reconstructing the functional dependence of observed quantum states on variable degrees of freedom. Our method is based on the classical shadows technique, usually employed in quantum computation with iid samples, combined with a functional parametrization of the quantum state and a global optimization routine, that takes into account correlations and dependencies between neighbouring data-points. We exemplify the use of our method in a polarimetry experiment, showing that it is able to reconstruct the polarization phase imparted by a BBO as a function of the wavelength, in the presence of multiple parameters. The adoption of FCS provides a better regularisation of the data points with respect to what could be achieved by a direct fit.

Our results highlight how quantum machine learning techniques can be applied to the metrological setting, paving the way for a cross-pollination between these fields. We believe that FCS will become a valuable tool in quantum and classical imaging, thanks to its simplicity and versatility, particularly in those scenarios where the amount of data is limited by fundamental and practical constraints.  

\section{Methods}
\subsection{Performance of FCS and sample complexity}
Here we provide a detailed description of the FCS method and its theoretical gurantees, in terms of variance, unbiasedness and resulting sample complexity. We stress that, in practice, once the classical shadows are obtained, the method proceeds differently from the here outlined theoretical analysis, via the help of a classical optimizer. 

We model the probe as a quantum state, e.g., that of a photon reflected or emitted by the observed object, represented by a density matrix $\rho\in\cD(\cH)$ on the probe's Hilbert space $\cH$. Here $\cD(\cH)$ indicates the space of positive trace-one linear operators on $\cH$. The classical variable takes values in the set $\cX$. The objective is to reconstruct the functional dependence of the probe state on the independent DOFs, i.e., in a supervised-learning perspective, to learn the function $\rho(x):\cX\rightarrow\cD(\cH)$ that maps classical inputs to quantum states. The experimenter can probe the system multiple times varying the DOFs, constructing a training set $\cT = \{(x_i,\rho(x_i))\}_{i=1}^T$ of DOF values and the corresponding probe quantum state. 

We start by restricting to pure-state hypotheses, i.e., $\eta(\bx)\in\cF$ are rank-one. In this case the sample and true loss are defined as
\begin{align}
    &\hat \ell^{(p)}_{\eta}(\bx_i) =  1-\tr{\eta(\bx_i),\hat\rho_{b_i,U_i}}\\
    &L^{(p)}_\eta = 1-\E{\bx}{\tr{\eta(\bx),\rho(\bx)}},
\end{align}
where the subtracted term matches the fidelity for pure-state $\eta(\bx)$. We can then prove the following Theorem, which guarantees unbiasedness of the loss estimation via FCS, as well as bounding the variance with a suitable norm, adapting Ref.~\cite{Huang2020} to our setting of non-identical copies. 

\begin{theorem}\label{thm:FCS_bias_variance}
    The average empirical loss obtained via FCS for each pure-state $\eta\in\cF$ has the following properties: (i) it is unbiased, i.e., 
    \begin{equation}
    \E{\cT,\{b_i,U_i\}_{i=1}^T}{\hat L^{(p)}_\eta(\cT)} = L^{(p)}_\eta,
    \end{equation}
    where the expectation is taken with respect to the random training set $\cT$, as well as to the random unitary and measurement outcomes applied on each copy;
    and (ii) it has bounded variance, i.e.,
    \begin{align}
        {\rm Var}(\hat L^{(p)}_{\eta}(\cT)) &\leq \E{\bx}{||\eta_0(\bx)||_{{\rm shadow}, \bx}^2}\leq \max_{\bx\in\cX} ||\eta_0(\bx)||_{{\rm shadow}, \bx}^2,
    \end{align}
    where $\eta_0(\bx) = \eta(\bx) - \E{\bx}{\tr{\eta(\bx)}}\frac{\1}{d}$, where $d$ is the Hilbert-space dimension and we have defined the $\bx$-shadow-norm of an operator $O$ as
    \begin{equation}\label{eq:sample_shadow_norm}
    ||O(\bx)||_{{\rm shadow}, \bx}^2:=\E{U}{\sum_b\bra{b}U\rho(\bx) U^\dag\ket{b} \cdot \left(\bra{b}U\cM^{-1}(O(\bx))U^\dag\ket{b}\right)^2}.
\end{equation}
\end{theorem}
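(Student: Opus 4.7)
The plan is to establish (i) and (ii) separately, following the variance-bound template of Huang et al.\ but accounting for the fact that each copy carries a distinct state $\rho(\bx_i)$ and that $\bx_i$ itself is random. The single ingredient that makes both parts work is the identity $\E{U,b|\sigma}{U^\dag\dketbra{b}U}=\cM(\sigma)$, which is just a rewriting of the definition of $\cM$. Applied to $\sigma=\rho(\bx_i)$ and composed with $\cM^{-1}$ (which is well-defined in our case since $\cM$ is the invertible depolarizing channel and $\cM^{-1}(O)=3O-\1$ for Pauli measurements), this yields $\E{U_i,b_i|\bx_i}{\hat\rho_{b_i,U_i}}=\rho(\bx_i)$.

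For unbiasedness, I would expand $\E{\cT,\{b_i,U_i\}}{\hat L^{(p)}_\eta(\cT)}=1-\frac{1}{T}\sum_i \E{\bx_i}{\E{U_i,b_i|\bx_i}{\tr{\eta(\bx_i)\hat\rho_{b_i,U_i}}}}$, push the inner expectation through the trace by linearity, invoke the identity above to replace the snapshot by $\rho(\bx_i)$, and recognize the result as $L^{(p)}_\eta$. This is the easy part.

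For the variance bound, I would first apply the centering trick: since the snapshot has unit trace, $\tr{\eta(\bx)\hat\rho_{b,U}}=\tr{\eta_0(\bx)\hat\rho_{b,U}}+\frac{1}{d}\E{\bx'}{\tr{\eta(\bx')}}$, and the second term is a constant independent of $\bx,U,b$, so it cancels inside any variance. This is why the Theorem's bound features $\eta_0$ rather than $\eta$ and is the only non-obvious step of the argument. Then I would use $\mathrm{Var}(X)\leq\E{}{X^2}$ on the single-sample random variable and invoke self-adjointness of $\cM^{-1}$ under the Hilbert--Schmidt inner product (immediate from $\cM^{-1}(O)=3O-\1$ being Hermiticity-preserving with symmetric action on Paulis) to move $\cM^{-1}$ from the snapshot onto $\eta_0(\bx)$. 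What results is exactly the $\bx$-shadow norm in Eq.~\eqref{eq:sample_shadow_norm}, integrated against the distribution of $\bx$; upper-bounding the expectation by the maximum over $\bx\in\cX$ gives the second displayed inequality.

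The main obstacle I anticipate is not conceptual but bookkeeping: correctly factoring the joint expectation $\E{\cT,\{b_i,U_i\}}{\cdot}$ into an outer $\E{\bx}{\cdot}$ and an inner $\E{U,b|\bx}{\cdot}$ for each copy, and being careful that the stated bound is the per-sample second-moment bound (any $1/T$ improvement from averaging over independent samples is absorbed into the notation, as in the original shadow-tomography analysis). Once the centering and self-adjointness reductions are in place, the remainder of the variance calculation is essentially the $\bx$-independent computation of Ref.~\cite{Huang2020} carried out under the conditional distribution at each $\bx$, then averaged.
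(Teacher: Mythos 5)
Your proposal is correct and follows essentially the same route as the paper's proof: unbiasedness via $\E{U,b|\bx}{\hat\rho_{b,U}}=(\cM^{-1}\circ\cM)(\rho(\bx))=\rho(\bx)$, and the variance bound via the centering by the constant $\E{\bx}{\tr{\eta(\bx)}}\1/d$ (legitimate because the snapshot has unit trace), dropping the squared mean, and using self-adjointness of $\cM$ to land on the $\bx$-shadow norm, then bounding the average over $\bx$ by the maximum. Your remark that the stated bound is the per-sample second moment (with the $1/T$ improvement left implicit) matches how the paper handles the averaging over the training set.
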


\begin{proof}

We first show that the estimate is unbiased:
\begin{align}
    \E{\cT,\{b_i,U_i\}_{i=1}^T}{\hat L^{(p)}_\eta(\cT)} &=\E{\bx,b,U}{\hat \ell^{(p)}_{\eta}(\bx)} = 1-\E{\bx,b,U}{\tr{\eta(\bx) \hat\rho_{b,U}}} \\\nonumber
    &= 1-\E{\bx,b,U}{\tr{\eta(\bx)\cM^{-1}(U^\dag\dketbra{b}U)}}\\\nonumber
    &= 1-\E{\bx}{\tr{\eta(\bx) \cM^{-1}\left(\E{U}{\sum_b \bra{b}U^\dag \rho(\bx) U\ket{b} \cdot U^\dag\dketbra{b}U}\right)}} \\\nonumber
    &= 1-\E{\bx}{\tr{\eta(\bx) (\cM^{-1}\circ\cM)(\rho(\bx))}} = L^{(p)}_\eta,
\end{align}
where in the fourth equality we have written explicitly the average over outcomes $b$ in terms of their probability on input state $\rho(\bx)$.

Instead, for the variance we have
\begin{align}
   {\rm Var}(\hat L^{(p)}_\eta(\cT)) & = \E{\bx,b,U}{\left(\hat \ell^{(p)}_{\eta}(\bx)-\E{\bx,b,U}{\hat\ell^{(p)}_{\eta}(\bx)}\right)^2} \\\nonumber
   &=\E{\bx,b,U}{\left(\tr{\eta(\bx)\hat\rho_{b,U}}-\E{\bx,b,U}{\tr{\eta(\bx)\hat\rho_{b,U}}}\right)^2}\\\nonumber
    &= \E{\bx,b,U}{\tr{\eta_0(\bx)\hat\rho_{b,U}}^2} - \E{\bx,b,U}{\tr{\eta_{0}(\bx)\hat\rho_{b,U}}}^2 \\\nonumber
    &\leq \E{\bx,b,U}{\left(\bra{b}U\cM^{-1}(\eta_{0}(\bx))U^\dag\ket{b}\right)^2} \\\nonumber
    &\leq \E{\bx}{||\eta_{0}(\bx)||_{{\rm shadow}, \bx}^2} \\\nonumber
    &\leq \max_{\bx\in\cX} ||\eta_{0}(\bx)||_{{\rm shadow}, \bx}^2\label{eq:variance_cs}
\end{align}
where the third equality follows from adding and subtracting an $\bx$-independent term $\frac{\E{\bx}{\tr{\eta(\bx)}}}{d}\1$, while the first inequality from the fact that $\cM$ is self-adjoint and disregarding a negative term.
\end{proof}
Using Theorem~\ref{thm:FCS_bias_variance} and~\cite[Theorem 1]{Huang2020}, it is straightforward to show that a number of samples 
\begin{equation}\label{eq:sample_complexity}
   T=O\left(\frac{\max_{\bx\in\cX} ||\eta_{0}(\bx)||_{{\rm shadow}, \bx}^2}{\epsilon^2}\log\frac{|\cF|}{\delta}\right)
\end{equation}
is sufficient to estimate all the $\hat\ell^{(p)}_\eta$, and hence to find the optimal pure-state hypothesis \eqref{eq:optimal_hypothesis}, up to error $\epsilon$ with probability at least $1-\delta$. Furthermore, the dependence on the hypothesis class' cardinality, which is potentially infinite, can be substituted with the size of a suitable measure of the class' complexity, e.g., the size of a covering net or the pseudo-dimension~\cite{Anthony1999}; explicit estimates are provided in~\cite{Fanizza2022}. 

Instead, for mixed-state hypotheses one identifies Helstrom-optimal projectors for each couple of hypotheses, for a total of $|\cF|(|\cF|-1)/2$, i.e., given $\eta_1(\bx), \eta_2(\bx)\in\cF$ we define the projector $\Pi_{1,2}(\bx)$ as the one satisfying
\begin{equation}
    \frac12||\eta_1(\bx)-\eta_2(\bx)||_1 = \tr{\Pi_{1,2}(\bx)(\eta_1(\bx)-\eta_2(\bx))}.
\end{equation}
These projectors, albeit not rank-one, can still be learned efficiently via the pure-state loss function with the same guarantees of Theorem~\ref{thm:FCS_bias_variance}. Once this is done, one identifies the optimal hypothesis with the one minimizing the quantity $\min_{h,k}\sum_i\tr{\Pi_{h,k}(\bx_i)(\eta(\bx_i)-\hat\rho_{b_i,U_i})}$ and obtains a close estimate of the optimal loss with sample complexity \eqref{eq:sample_complexity}, as provided by ~\cite[Theorem 3]{Fanizza2022}.

\subsection{Explicit expressions of the fidelity}
The classical snapshots of our polarisation settings are written as

\begin{equation}\label{eq:polarizations}
\begin{aligned}
    &\hat \rho_{H} = 2\dketbra{H}-\dketbra{V},  
    &\hat \rho_{V} = 2\dketbra{V}-\dketbra{H},\\
     &\hat \rho_{D} = 2\dketbra{D}-\dketbra{A},  
    &\hat \rho_{A} = 2\dketbra{A}-\dketbra{D},\\
     &\hat \rho_{R} = 2\dketbra{R}-\dketbra{L},  
    &\hat \rho_{L} = 2\dketbra{L}-\dketbra{R}.
\end{aligned}
\end{equation}
For a generic mixed-state hypothesis  $\eta(\bx)\in\cF$ \eqref{eq:hyp_class}, the trace-distances to each of the polarisation states needed to calculate the loss functions, either local~\eqref{eq:loss_loc} or global~\eqref{eq:loss_fcs}, are given by
\begin{equation}\label{eq:polarization_shadows}
\begin{aligned}
    &F(\ket{\psi(\bx)},\hat\rho_H) = 2\cos^2\frac{\theta(\bx)}{2}-\sin^2\frac{\theta(\bx)}{2},  
    &F(\ket{\psi(\bx)},\hat\rho_V) = 2\sin^2\frac{\theta(\bx)}{2}-\cos^2\frac{\theta(\bx)}{2},\\
     &F(\ket{\psi(\bx)},\hat\rho_D) = \frac{1+3\cos[\phi(\bx)] \sin[\theta(\bx)]}{2},  
    &F(\ket{\psi(\bx)},\hat\rho_A) = \frac{1-3\cos[\phi(\bx)] \sin[\theta(\bx)]}{2},\\
     &F(\ket{\psi(\bx)},\hat\rho_R) = \frac{1+3\sin[\phi(\bx)] \sin[\theta(\bx)]}{2},  
    &F(\ket{\psi(\bx)},\hat\rho_L) = \frac{1-3\sin[\phi(\bx)] \sin[\theta(\bx)]}{2}.
\end{aligned}
\end{equation}

\section{Back matter}

\subsection{Funding}
This project is funded by the QuantERA II Programme (Qucaboose Project), that has received funding from the EU H2020 research and innovation programme under GA No 101017733, and with funding organization NQSTI (Italy); 
 the PRIN project PRIN22-RISQUE-2022T25TR3 of the Italian Ministry of University; Rome Technopole.  
 MP and MB acknowledge support from MUR Dipartimento di Eccellenza 2023-2027. M.R. acknowledges support from the project PNRR - Finanziato dall'Unione europea - MISSIONE 4 COMPONENTE 2 INVESTIMENTO 1.2 - ``Finanziamento di progetti presentati da giovani ricercatori'' - Id MSCA\_0000011-SQUID - CUP F83C22002390007 (Young Researchers) - Finanziato dall'Unione europea - NextGenerationEU.

\subsection{Acknowledgment}
We thank R\'emy Grasland for assistance with the data analysis on a preliminary version of the experiment and Marco Fanizza for early discussions on the project.

\bibliography{library,polar}

\end{document}